\begin{document}

\title{Locally Optimal Control of Quantum Systems with Strong Feedback}
\author{Alireza Shabani}
\affiliation{Department of Electrical Engineering, University of
Southern California, Los Angeles, CA 90089, USA}
\author{Kurt Jacobs}
\affiliation{Department of Physics, University of Massachusetts at Boston, 100 Morrissey
Blvd, Boston, MA 02125, USA}

\begin{abstract}
For quantum systems with high purity, we find all observables that, when continuously monitored, maximize the instantaneous reduction in the average linear entropy. This allows us to obtain all locally optimal feedback protocols with strong feedback, and explicit expressions for the best such protocols for systems of size $N \leq 4$. We also show that for a qutrit the locally optimal protocol is the \emph{optimal} protocol for observables with equispaced eigenvalues, providing the first fully optimal feedback protocol for a 3-state system. 
\end{abstract}

\pacs{03.65.Yz, 87.19.lr, 02.30.Yy, 03.65.Ta}
\maketitle

\newtheorem{theo}{Theorem} \newtheorem{lemma}{Lemma}

Observation and control of coherent quantum behavior has been realized in a variety of mesoscopic devices~\cite{Steffen06, Majer07, Niskanen07, Plantenberg07}. With further refinements, such devices may well form the basis of new technologies, for example in sensing~\cite{Boixo07} and communication~\cite{Cook07, Chen07}. Feedback, in which a system is continuously observed and the information used to control its behavior in the presence of noise, is an important element in the quantum engineer's toolbox~\cite{Cook07, Chen07, Smith02, rapidP, Steck04, Yanagisawa06}. In view of this, one would like to know the limits on such control, given any relevant limitations on the measurement and/or control forces. However, except in special cases~\cite{Yanagisawa98}, the dynamics of continuously observed quantum systems is nonlinear. Further, results on the quantum-to-classical transition show that this nonlinear dynamics, described by stochastic master equations (SME's), is necessarily every bit as complex (and chaotic) as that of nonlinear classical systems~\cite{Bhattacharya00}. Because of this, fully general and exact results regarding optimal quantum feedback are unlikely to exist; certainly no such results have been found for nonlinear classical systems~\cite{Whittle}. Nevertheless, one would like to obtain results that give insights applicable across a range of systems.

Quantum feedback control is implemented by modifying a ``control" Hamiltonian, $H$,  that is some part of the system Hamiltonian. Here we will examine feedback protocols in the regime where the controls are able to keep the system close to a pure state. This is an important regime, both because it is where many quantum control systems will need to operate, and because it allows one to simplify the problem by using a power series expansion~\cite{Jacobs07c}. In addition to working in the regime of good control, we make two further simplifications. The first is that the control is strong -- that is, that 1) the only constraint on $H$ is that $\mbox{Tr}[H^2] \leq \mu^2$ for some constant $\mu$, and 2) that $H$ can induce dynamics much faster than both the dynamics of the system and the rate at which the measurement extracts information. This means that $H$ is {\em effectively} unconstrained. We thus deal strictly with a subset of the regime of good control, defined by $\mu \gg k$ and $k \gg \beta$. Here $k$ is the strength of the measurement (defined precisely below), and $\beta$ is the noise strength, which we define as the rate of increase of the linear entropy due to the noise. The latter inequality is essential for good control.  This regime is applicable, for example, to mesoscopic superconducting systems~\cite{Steffen06, Majer07, Niskanen07, Plantenberg07}, such as coupled Cooper-pair boxes. Here the speed of control rotations is typically 1-10 GHz~\cite{Steffen06}, and that of decoherence  is $10^6 \mbox{s}^{-1}$~\cite{Vion02}. Measuring these at a rate $k = 5\times 10^7 \mbox{s}^{-1}$ is reasonable~\cite{Jacobs07b}, and falls in the above regime. 

Our second simplification is to seek control protocols that give the maximum increase in the control objective in each time-step \emph{separately} --- that is, that are \emph{locally optimal} in time. However, we will find that for a qutrit, the locally optimal protocol (LOP) is the {\em optimal} protocol for observables with equispaced eigenvalues. 

We will allow the controller to measure a single observable, $X$. Since the control allows us to perform all unitary operations, and since transforming the system is equivalent to transforming the observable being measured, $X$, this allows the controller to measure all observables of the form $X^{\mbox{\scriptsize u}} = UXU^\dagger$, for any unitary $U$. Since the control Hamiltonian is not limited, the only constraint on the controller
is the rate at which the measurement extracts information (the measurement strength, $k$).

A sensible and widely applicable control objective is to maximize the probability, $P$, that the system will be found in a desired pure state (referred to as the {\em target} state) at a given time $T$ (called the {\em horizon} time). This objective also allows one to maximize $P$ in the steady-state, and is the objective we will consider here. 

In what follows we will denote the state of the quantum system by the density matrix $\rho$, and the $N$ eigenvalues of $\rho$ as $\lambda_i$. We place these in decreasing order so that $\lambda_{i}\geq\lambda_{i+1}$. Since the control dynamics is fast, at any time $T$ we can apply $H$ to quickly rotate $\rho$ so as to maximize $P$ at that time. This means rotating $\rho$ so that the eigenstate corresponding to $\lambda_0$ is the target state, giving $P = \lambda_0$. Thus the optimality of the control is determined entirely by the eigenvalues $\lambda_i$.  (Since the control Hamiltonian cannot change these eigenvalues, the only further role of $H$ is to set the observable to be measured at each time, $X^{\mbox{\scriptsize u}}(t)$.) The probability that the state is found in the target state at a given future time, is thus the average of $\lambda_0$ over all future trajectories at that time: $P(t) = \langle \lambda_0(t) \rangle$. Note also that because the state is near-pure, we can write $\lambda_0 = 1 - \Delta$, with $\Delta \ll 1$. Thus $P = 1 - \langle \Delta(t) \rangle$, with $\langle \Delta (T) \rangle $ the error probability.

By definition, the locally optimal feedback protocol is the one that maximizes the rate of reduction of $\langle \Delta \rangle$ at each time-step. To find the LOP, we must
find the observable, $X^{\mbox{\scriptsize u}}$, that maximizes the rate of
reduction of $\langle \Delta \rangle$ for any $\rho$. To derive the equation of motion for $\langle\Delta\rangle$, we start with the SME for the density matrix under a continuous measurement of $X^{\mbox{\scriptsize u}}$: 
\begin{equation}
d\rho =-k[X^{\mbox{\scriptsize u}},[X^{\mbox{\scriptsize u}},\rho ]]dt+\sqrt{2k}(X^{\mbox{\scriptsize u}}\rho +\rho X^{\mbox{\scriptsize u}}-2\langle X^{\mbox{\scriptsize u}}\rangle\rho )dW .
\label{SME}
\end{equation}
Here the assumption of strong feedback allows us to drop any system
Hamiltonian, and $dW$ is Gaussian (Wiener) noise, satisfying $\langle
dW\rangle =0$ and $dW^{2}=dt$. Note that this SME does not include noise; we exclude noise in what follows except when we calculate results for the steady-state.  To obtain the equation of motion for $\langle \Delta \rangle$, to first order in $\Delta$, we first note that $\Delta = (1 - \mbox{Tr}[\rho^2])/2$ (that is, half the linear entropy), to first order in $\Delta$~\footnote{Because the linear entropy is equal to $2\Delta$ to first order in $\Delta$,  the protocol in~\cite{rapidP} is the optimal protocol for our problem for a single qubit. Our purpose here is, of course, to go beyond a single qubit. Our results are also relevant to rapid-purification protocols for larger systems~\cite{rap2}.}. We calculate the derivative of $\mbox{Tr}[\rho^2]$ directly from the SME, and then expand the result in powers of $\Delta$. This gives  
\begin{eqnarray}
  \!\!\!\!\!\!\!\! d\Delta & \!\! = \!\! & - 8k\sum_{i\not=0}\lambda _{i}|X^{\mbox{\scriptsize u}}_{i0}|^{2} dt  \! - \! \sqrt{8k} (\Delta X^{\mbox{\scriptsize u}}_{00} \! - \!\! \sum_{i\not=0}\lambda _{i} X^{\mbox{\scriptsize u}}_{ii} ) dW ,
\end{eqnarray}
where $X^{\mbox{\scriptsize u}}_{nm} \equiv \langle
n|X^{\mbox{\scriptsize u}}|m\rangle$. The equation of motion for $\langle \Delta \rangle$ is given by averaging this equation over $dW$. Thus  $\langle \dot{\Delta}\rangle =-8k\sum_{i\not=0}\lambda _{i}|X^{\mbox{\scriptsize u}}_{i0}|^{2}$. 

We now prove the following theorem:

\begin{theo}
Define $X$ as Hermitian, $U$ as unitary, and $\rho$ as a density operator of dimension $N$. Without loss of generality we set $X$ and $\rho$ to be diagonal, arrange the eigenvalues of $\rho$, $\lambda_i$, in decreasing order, and arrange the eigenvalues of $X$ so that the two extreme values are in the first 2-by-2 block, corresponding to the two largest eigenvalues of $\rho$~\footnote{If the extreme eigenvalues of $X$ are degenerate, then one has a choice as to which eigenvectors to select to correspond to the extreme eigenvalues.}. The maximum of $F(U) = \sum_{i\not=0} \lambda_i  | \langle i|U X U^\dagger | 0\rangle |^2 \equiv  \sum_{i\not=0} \lambda_i  |X^{\mbox{\scriptsize u}}_{i0} |^2$ is achieved if
\begin{equation}
   U =  U_{\mbox{\scriptsize {\em opt}}} =  U_2^{\mbox{\scriptsize {\em u}}} \oplus V ,
   \label{Uopt}
\end{equation}
where $U_2^{\mbox{\scriptsize {\em u}}}$ is any 2-by-2 unitary unbiased w.r.t the basis $\{(1,0),(0,1)\}$:
\begin{equation}
U_2^{\mbox{\scriptsize {\em u}}} =  \frac{e^{i \phi}}{\sqrt{2}} \left(
\begin{array}{cc}
 e^{i\theta_1} &  -e^{i\theta_2}    \\
 e^{-i\theta_2} &   e^{-i\theta_1}
\end{array} \right) ,
\end{equation}
and $V$ is any unitary with dimension $N \! - \! 2$. 
\label{thm1}
\end{theo}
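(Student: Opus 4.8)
The plan is to recast $F(U)$ as a weighted variance and then bound it by two successive, physically transparent inequalities whose equality conditions can be satisfied simultaneously by the stated $U_{\mbox{\scriptsize opt}}$. First I would introduce the single vector $|v\rangle \equiv U^\dagger|0\rangle$ and note that $X^{\mbox{\scriptsize u}}_{i0} = \langle i|UXU^\dagger|0\rangle = \langle U^\dagger i|X|v\rangle$. Summing squared moduli over \emph{all} $i$ gives $\sum_i |X^{\mbox{\scriptsize u}}_{i0}|^2 = \langle v|X^2|v\rangle$, while the $i=0$ term equals $|\langle v|X|v\rangle|^2$; hence $\sum_{i\neq 0}|X^{\mbox{\scriptsize u}}_{i0}|^2 = \langle v|X^2|v\rangle - \langle v|X|v\rangle^2 \equiv \mathrm{Var}_{|v\rangle}(X)$, the variance of $X$ in the pure state $|v\rangle$. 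This is the key structural observation: the ``total weight'' $\sum_{i\neq0}|X^{\mbox{\scriptsize u}}_{i0}|^2$ depends on $U$ only through the column $|v\rangle = U^\dagger|0\rangle$, whereas how that weight is distributed among the indices $i\neq 0$ is governed by the remaining freedom in $U$.

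Next comes the two-step bound. Since the $\lambda_i$ are arranged in decreasing order, $\lambda_1\geq\lambda_i$ for every $i\neq0$, so $F(U) \le \lambda_1 \sum_{i\neq0}|X^{\mbox{\scriptsize u}}_{i0}|^2 = \lambda_1\,\mathrm{Var}_{|v\rangle}(X)$, with equality iff $X^{\mbox{\scriptsize u}}_{i0}=0$ for all $i\geq2$, i.e.\ all the weight sits on $i=1$. For the second step I would invoke the elementary bound on the variance of a bounded observable (Popoviciu): writing $c=(x_{\max}+x_{\min})/2$ with $x_{\max},x_{\min}$ the extreme eigenvalues of $X$, we have $\mathrm{Var}_{|v\rangle}(X) \le \langle v|(X-c)^2|v\rangle \le \tfrac14(x_{\max}-x_{\min})^2$, and equality forces $|v\rangle$ to be an equal-weight superposition of the two extreme eigenvectors. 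Combining, $F(U)\le \tfrac14\lambda_1(x_{\max}-x_{\min})^2$ for every $U$.

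Finally I would show that $U_{\mbox{\scriptsize opt}} = U_2^{\mbox{\scriptsize u}}\oplus V$ saturates both inequalities at once. Because $U_{\mbox{\scriptsize opt}}$ is block diagonal with respect to the $\{0,1\}$ block, which by the stated arrangement is precisely the span of the extreme eigenvectors of $X$, the transformed operator $X^{\mbox{\scriptsize u}}=U_{\mbox{\scriptsize opt}}XU_{\mbox{\scriptsize opt}}^\dagger$ is itself block diagonal; hence $X^{\mbox{\scriptsize u}}_{i0}=0$ for all $i\geq2$ and the first inequality is tight. Unbiasedness of $U_2^{\mbox{\scriptsize u}}$ makes $|v\rangle=(U_2^{\mbox{\scriptsize u}})^\dagger|0\rangle$ an equal superposition of $|x_{\max}\rangle$ and $|x_{\min}\rangle$, so the second inequality is tight as well; a one-line computation then confirms $|X^{\mbox{\scriptsize u}}_{10}|=\tfrac12(x_{\max}-x_{\min})$ and $F(U_{\mbox{\scriptsize opt}})=\tfrac14\lambda_1(x_{\max}-x_{\min})^2$, matching the upper bound.

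The main obstacle is the \emph{compatibility} of the two equality conditions. A priori, routing all the weight onto $i=1$ constrains where $U^\dagger$ sends $|1\rangle$, while maximizing the variance constrains where it sends $|0\rangle$, and one must verify these do not conflict. I would resolve this by computing the component of $X|v\rangle$ orthogonal to $|v\rangle$ for the optimal $|v\rangle$ and showing it again lies in the extreme-eigenvector span; choosing $U^\dagger|1\rangle$ parallel to this component then keeps $U$ block diagonal. This is exactly what produces the $2\times2$-unbiased-$\oplus$-arbitrary-$V$ structure, and it also explains why $V$ is completely free, since it acts on neither $|0\rangle$ nor the subspace carrying the weight routed to $|1\rangle$.
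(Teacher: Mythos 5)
Your proposal is correct and takes essentially the same route as the paper's proof: bounding $F(U)$ above by $\lambda_1 \, \mathrm{Var}(X, U^\dagger|0\rangle)$, invoking the standard bound $\mathrm{Var}(X,|\psi\rangle) \leq (x_{\max}-x_{\min})^2/4$, and then checking that $U_{\mathrm{opt}} = U_2^{\mathrm{u}} \oplus V$ saturates both inequalities. Your extra care about the compatibility of the two equality conditions (and the equal-weight superposition at degenerate extremes) simply spells out what the paper handles tersely and, for the degenerate case, defers to its supplement, so there is no substantive difference in approach.
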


\begin{proof}
We first derive an upper bound on $F(U)$. This is 
\begin{eqnarray} 
F(U) & \le &  \lambda_1 \sum_{i=1}^{N-1} |X^{\mbox{\scriptsize u}}_{i0} |^2 = \lambda_1 \left[ \sum_{i=0}^{N-1} |X^{\mbox{\scriptsize u}}_{i0}|^2 - |X^{\mbox{\scriptsize u}}_{00} |^2  \right]  \nonumber \\ 
& = &  \lambda_1 \mbox{Var}\left( X, U^\dagger | 0> \right) \le  \lambda_1 \frac{(x_{max} - x_{min})^2}{4}.  
\end{eqnarray}
Here $\mbox{Var}(X,|\psi\rangle)$ denotes the variance of $X$ in the state $|\psi\rangle$. The first inequality is immediate, and the last is well-known~\cite{Seo}. Since $U_{\mbox{\scriptsize opt}}$ saturates this bound, it achieves the maximum. That only unitaries of the above form achieve the maximum is simplest to show when the eigenvalues of $\rho$ are non-degenerate: to saturate the first inequality one must restrict $U$ to the subspace spanned by $\{ |0\rangle, |1\rangle\}$, and to achieve the last, $U$ must be unbiased w.r.t to the eigenbasis. When the eigenvalues of $\rho$ are degenerate, a careful analysis shows that this remains true~\footnote{See EPAPS Document No. E-PRLTAO-101-012850 for a constructive proof of theorem 1. For more information on EPAPS, see http://www.aip.org/pubservs/epaps.html.}.      
\end{proof}

The remarkably simple form of $U_{\mbox{\scriptsize opt}}$ tells us that to 
obtain the fastest reduction in $\langle \Delta\rangle$ at each time, $t$, and thus 
realize an LOP, we must choose $X^{\mbox{\scriptsize u}}$ at each time 
to concentrate the distinguishing power of the measurement entirely on the largest two eigenvalues of $\rho$ at that time, and measure in a basis that is \emph{unbiased} with respect to the corresponding eigenvectors. It also tells us that the maximum achievable rate of reduction is 
\begin{equation}
\langle\dot{\Delta}\rangle = -8k \langle \lambda_1\rangle |X^{
\mbox{\scriptsize u}}_{01}|^2 = -2k\langle \lambda_1\rangle (x_{
\mbox{\scriptsize max}} - x_{\mbox{\scriptsize min}})^2 ,
\end{equation}
where $\lambda_1$ is the
second largest eigenvalue of $\rho(t)$, and $x_{\mbox{\scriptsize max}}$ and $
x_{\mbox{\scriptsize min}}$ are the maximum and minimum eigenvalues of $X$.
Note that $\langle \lambda_1\rangle$ also decreases at the rate $\langle\dot{\Delta}\rangle$. We have complete freedom in choosing the unitary submatrix $V$, as
it has no effect on $\langle\dot{\Delta}\rangle$. However, $V$ also 
has no effect on $\lambda_1$; $V$ induces transition rates only between the 
$(N-2)$ smallest eigenvalues. 

We can now obtain a lower bound on the performance of LOP's for any system.
Whatever the choice of $V$, $\langle \lambda_1\rangle$ will always be
greater than or equal to $\langle \Delta\rangle/(N-1)$. We therefore have 
$\langle\dot{\Delta}\rangle \leq -[8/(N-1)] k \langle \Delta\rangle |X^{
\mbox{\scriptsize u}}_{01}|^2$, and thus in the absence of noise, throughout
the evolution the error probability will satisfy
\begin{equation}
\langle \Delta(t)\rangle \leq \Delta_0 e^{-2kt (\delta x)^2/(N-1)} . \label{lb1}
\end{equation}
where $\delta x \equiv x_{\mbox{\scriptsize max}} - x_{ 
\mbox{\scriptsize min}}$.

In the presence of noise, the important quantity is the steady-state error probability, $\langle \Delta \rangle_{\mbox{\scriptsize ss}}$, and we can re-derive the lower bound on this given in~\cite{Jacobs07c}. In the worst case, $V$ leaves the $N-2$ smallest eigenvalues unchanged, so that under
isotropic noise all the small eigenvalues remain identical once
homogenized by the action of the LOP. The equation of motion for $\langle \Delta  \rangle_{\mbox{\scriptsize ss}}$ is then $\langle \dot{\Delta}\rangle =-8k\langle \Delta \rangle
|X_{01}^{\mbox{\scriptsize u}}|^{2}/(N-1)+\beta /2$ (recall that $\beta $ is the noise strength). This gives $ \langle \Delta \rangle _{\mbox{\scriptsize ss}} = [\beta (N-1)]/[4k(\delta x )^{2}]$, a lower bound on the performance of all LOP's with isotropic noise.

We further have the nice result that, for qubits and qutrits, the two lower bounds just derived are tight -- here they give the performance of the best LOP's because the action of $V$ is trivial for $N<4$. 

For $N\geq 4$, to obtain the \emph{best} LOP, one would need to choose $V$
to continually minimize the entropy of the smallest $N-2$ eigenvalues in
such a way as to allow one to generate the largest possible value of 
$|\langle\dot{\Delta}\rangle|$ at all future times. Using time-independent
perturbation theory~\cite{BJBook}, we can derive from Eq.(\ref{SME}) the
equations of motion for all the small eigenvalues. These are
\begin{equation}
d\lambda_i = F_i(\boldsymbol{\lambda}, X^{\mbox{\scriptsize u}}) dt +
\sigma_i(\boldsymbol{\lambda},X^{\mbox{\scriptsize u}}) dW  , \label{eqsys}
\end{equation}
where $\boldsymbol{\lambda} = (\lambda_1,\ldots,\lambda_{N-1})$, and
\begin{eqnarray}
F_i & = & - 8 k \left[ \lambda_i |X^{\mbox{\scriptsize u}}_{0i}|^2 +
\sum_{j>0, j\not= i}\frac{\lambda_i \lambda_j}{\lambda_i - \lambda_j}
|X^{\mbox{\scriptsize u}}_{ji}|^2 \right] ,  \notag \\
\sigma_i & = & \sqrt{8k} \lambda_i ( X^{\mbox{\scriptsize u}}_{00} - X^{
\mbox{\scriptsize u}}_{ii}) ,
\end{eqnarray}
for $i = 1,\ldots , N-1$. Note that under locally optimal control, the
equation for $\langle \lambda_1\rangle$ reduces to $d\langle
\lambda_1\rangle = -8k\langle \lambda_1\rangle |X^{\mbox{\scriptsize u}
}_{01}|^2$.

The equations for the small eigenvalues are nonlinear. As a result of this,
in general in finding the optimal $V$, one cannot easily eliminate the
stochastic terms as we have been able to in the analysis so far, even though
we are interested purely in the average value of $\Delta$. Nevertheless for
$N=4$ we can obtain the optimal $V$ by combining the above results with
those of~\cite{rapidP}, which shows that the maximal increase in the largest
eigenvalue of a qubit is obtained when the observable is unbiased with
respect to the eigenvectors. Since the best thing we can do, given that we
continually maximize $dS$, is to separate the two smallest eigenvalues as
rapidly as possible, the result in~\cite{rapidP} tells us that $V$ must be
unbiased with respect to the eigenvectors of the two smallest eigenvalues.
We now label the eigenvalues of $X$ in decreasing order as $x_{i}$. Because the SME is invariant under the transformation $X \rightarrow X + \alpha I$ ($\alpha$ real), we add a constant to $X$ so that $x_{4}=-x_{1}$, without loss of generality. The best locally optimal control is then achieved by
\begin{equation}
X^{\mbox{\scriptsize u}}=\left(
\begin{array}{cc}
0 & x_{1} \\
x_{1} & 0
\end{array}
\right) \oplus \left(
\begin{array}{cc}
d & c \\
c & d
\end{array}
\right) ,
\end{equation}
where $2d=x_{2}+x_{3}$, $2c=x_{2}-x_{3}$, and $x_{1}>c$. 
If the eigenvalues of $X^{\mbox{\scriptsize u}}$ are equally
spaced, then $d=0$ and all stochastic terms vanish. In this case $\langle \Delta \rangle = \Delta$, and the equations for the system, excluding noise, are
\begin{eqnarray}
\dot{\lambda}_{1} &=&\dot{\Delta}\;=\;-8kx_{1}^{2}\lambda _{1} , \\
\dot{\lambda}_{2} &=&-\dot{\lambda}_{3}\;=\;8kc^{2}\lambda _{2}\lambda
_{3}/(\lambda _{2}-\lambda _{3}) ,
\end{eqnarray}
where $\dot{\Delta}=\sum_{i=1}^{3}\dot{\lambda}_{i}=\dot{\lambda}_{1}$. Even
though these equations are nonlinear, it is possible to obtain an analytic
expression for the behavior of $\Delta $ once certain transients have died
away. To do this note that the LOP first equalizes $\lambda_1$ and $\lambda_2$, and then must rapidly and repeatedly swap them. As a result they remain equal, and their derivatives become the average of $\dot{\lambda_1}$ and $\dot{\lambda_2}$ above. Next, calculating the derivative of the ratio $R=\lambda_{1}/\lambda _{3}$, we find that for $x_1 > \sqrt{2}c$, $R$ stabilizes at the value $R_{\mbox{\scriptsize ss}}=(x_{1}^{2}+c^{2})/(x_{1}^{2}-c^{2})$. Once this has happened, the equation for $\Delta $ reduces to the simple exponential decay $\dot{\Delta}=-\gamma \Delta $, with the rate
\begin{equation}
  \gamma =[4k/(3c^{2})](x_{1}^{2}-c^{2})(x_{1}^{2}-2c^{2}) \; , \;\;\;\;\; x_1 > \sqrt{2} c .
\end{equation}
For $c < x_1 \leq \sqrt{2} c$, after a time such that $R \gg 1$, the result is also exponential decay, but with $\gamma = 4 k x_1^2$. 

We have now found the best locally optimal protocols for $N=3$ and $4$, but
in each case the LOP is not necessarily the optimal protocol. We will now
examine the LOP for a qutrit and show that under certain conditions it is the
optimal protocol. Before we do this, we note that we can use the theorem above to
place an upper bound on the performance of \emph{any} protocol for all
systems in the regime of good control. Since $\mbox{max}(\langle \dot{\Delta}
\rangle )=-2k\langle \lambda _{1}\rangle (x_{\mbox{\scriptsize max}}-x_{
\mbox{\scriptsize min}})^{2}$, and $\langle \lambda_1 \rangle \leq \langle
\Delta \rangle $, the steady-state error probability for any protocol satisfies
\begin{equation}
\langle \Delta \rangle _{
\mbox{\scriptsize ss}}\geq \beta/[4k (x_{\mbox{\scriptsize max}}-x_{
\mbox{\scriptsize min}})^{2}] ,
\end{equation}
where $\beta $ is once again the noise strength. This is true for any noise process, isotropic or otherwise.

We now analyze the case of a qutrit when $X$ has equally spaced eigenvalues. As usual we denote these as $x_1 > x_2 >x_3$. We also add a constant to $X$ so that $x_3 = -x_1$ and $x_2=0$. With these definitions, the LOP for a single qutrit involves choosing $U$ so that 
\begin{equation}
X^{\mbox{\scriptsize u}}_1 = \left(
\begin{array}{ccc}
0 & q & 0 \\
q & 0 & 0 \\
0 & 0 & 0
\end{array}
\right) , 
\end{equation}
where $q = (x_1 - x_3)/2 = x_1$. This generates the evolution $
(\lambda_1(t),\lambda_2) = (\lambda_1^{0} e^{-\gamma t}, \lambda_2^{0})$,
where $\lambda_1^{0}$ and $\lambda_2^{0}$ are the initial eigenvalues of 
$\lambda_1$ and $\lambda_2$, and we have defined $\gamma \equiv 8k q^2$. 
This measurement is applied until $\lambda_1(t) = \lambda_2^{0}$, which 
occurs after a time $\tau = \ln{
(\lambda_1^{0}/\lambda_2^{0})}/\gamma$. At this point the LOP changes
abruptly, and involves rapidly switching the measurement between $X^{
\mbox{\scriptsize u}}_1$ and $X^{\mbox{\scriptsize u}}_2 = O_{
\mbox{\scriptsize flip}} X^{\mbox{\scriptsize u}}_1 O^\dagger_{
\mbox{\scriptsize flip}}$, where $O_{\mbox{\scriptsize flip}}$ swaps the
eigenstates of $\lambda_1$ and $\lambda_2$. In the limit of fast switching,
this generates the evolution $(\lambda_1(t),\lambda_2(t)) =
(\lambda_1(\tau),\lambda_2(\tau))e^{-\gamma t/2}$. Denoting now the \emph{initial} time by $t$, the error probability under the LOP at the final time $T$ (the
horizon time) is thus 
\begin{eqnarray}
\Delta_{\mbox{\scriptsize LOP}}(\boldsymbol{\Lambda},t,T) & = & \Lambda_1
e^{-\gamma (T-t)} + \Lambda_2 , \;\;\;\;\;\;\;\;\; T \leq t + \tau ,  \notag
\\
\Delta_{\mbox{\scriptsize LOP}}(\boldsymbol{\Lambda},t,T) & = &2 \sqrt{
\Lambda_1\Lambda_2} e^{-\gamma (T-t)/2} , \;\;\;\;\;\; \! T \geq t + \tau ,
\notag
\end{eqnarray}
where we have defined $\Lambda_1 \equiv \lambda_1(t)$, $\Lambda_2 \equiv
\lambda_2(t)$ and $\boldsymbol{\Lambda} \equiv (\Lambda_1,\Lambda_2)$. In optimal control theory, the quantity we wish to minimize, as a function of the initial and final times, is called the {\em cost} function. Having an explicit expression for the cost function generated
by the LOP, $\Delta_{\mbox{\scriptsize LOP}}(\boldsymbol{\Lambda},t,T)$, we can 
now use the verification theorems
of optimal control theory to determine whether the LOP is the optimal
protocol~\cite{Wiseman08,Zhou97}. We first consider the case when $T < t + \tau$.
For the LOP to be optimal, the cost function must satisfy the
Hamilton-Jacobi-Bellman (HJB) equation corresponding to the dynamical
equations for the system (Eq.(\ref{eqsys}))~\cite{Zhou97}:
\begin{eqnarray}
\frac{\partial \Delta}{\partial t} = \max_{X^{\mbox{\scriptsize u}}(t)} \left[ -
\sum_i F_i \frac{\partial \Delta}{\partial \Lambda_i} - \sum_{ij}\frac{\sigma_i
\sigma_j}{2} \frac{\partial^2 \Delta}{\partial \Lambda_i \partial \Lambda_j} 
\right] .
\end{eqnarray}
To check that the cost function is a solution to this equation, one
substitutes in $\Delta_{\mbox{\scriptsize LOP}}$ for $T \leq t+\tau$ on the RHS,
and then optimizes this at each time $s$ with respect to $X^{
\mbox{\scriptsize u}}$ (being the set of control parameters). We must check
that $\Delta_{\mbox{\scriptsize LOP}}$ is a solution to the HJB, \emph{and} that
the maximum on the RHS is realized when $X^{\mbox{\scriptsize u}}(t)$ is
precisely that prescribed by the LOP. Performing the substitution, we find
that the RHS is
\begin{equation}
\max_{X^{\mbox{\scriptsize u}}(t)} \left[ \zeta (t) |X^{\mbox{\scriptsize u}
}_{10}|^2 + |X^{\mbox{\scriptsize u}}_{20}|^2 + \eta(t) |X^{
\mbox{\scriptsize u}}_{21}|^2 \right] (8k \Lambda_2),
\end{equation}
where $\zeta = (\Lambda_1/\Lambda_2)e^{-\gamma(T-t)} \geq 1$ and $\eta(t)
\leq 1$. Note that $\gamma$ is already fixed by the LOP, and thus does not
take part in the optimization. We performed this maximization over $X^{
\mbox{\scriptsize u}}$ numerically, and verified that whenever $\zeta \geq 1
\geq \eta$, the maximum is obtained by the locally optimal $U$. Thus $|X^{
\mbox{\scriptsize u}}_{10}| = q$ and $|X^{\mbox{\scriptsize u}}_{20}| = |X^{
\mbox{\scriptsize u}}_{21}| = 0$. The RHS of the HJB equation is therefore $
\gamma \Lambda_1 e^{-\gamma(T-t)}$, and this is indeed equal to $\partial \Delta_{
\mbox{\scriptsize LOP}}/(\partial t)$, for $T \leq t + \tau$, being the LHS.
Since the derivatives of $\Delta_{\mbox{\scriptsize LOP}}$ that appear in the HJB
equation are all continuous for $T \leq t + \tau$ (the final requirement of the verification theorem), the LOP is the optimal
protocol for $T \leq t + \tau$.

To determine whether the LOP is optimal for $T \geq t+ \tau$, we note that the derivatives of $\Delta_{\mbox{\scriptsize LOP}}$ are \emph{not} continuous at $t = T- \tau$. As a result the classic verification theorem employed above no longer applies; we need a new verification theorem, developed in the last decade~\cite{Gozzi05},  that uses generalized solutions of second-order partial differential equations, referred to as \emph{viscosity solutions}~\cite{Crandall92}. Applying this ``viscosity" verification theorem to the LOP protocol for a qutrit shows that it remains optimal for $T > t+ \tau$. Since viscosity solutions will be unfamiliar to most readers, the details of this analysis will be presented elsewhere. We have also performed the analysis for the case when the eigenvalues of $X$ are not equally spaced, and in this case we find that the locally optimal protocol is {\em not} globally optimal. 

In summary, we have found the class of all locally optimal feedback
protocols in the regimes of good control and strong feedback, and obtained
explicit expressions for the best of these for $N=3$ and $
N=4$. We have also shown that the former is globally optimal for some, but not for all,  observables. The question of how to beat the LOP for a single qutrit when it is not optimal is an interesting one, and will be the subject of future work. 

{\em Acknowledgments:} We thank Michael Hsieh for suggesting functional analysis for  proving theorem 1, which was very useful, and one of the referees for simplifying our proof with the use of an upper bound.

\end{document}